\begin{document}
\title{Approximation Algorithms for Drone Delivery Scheduling Problem}

%\titlerunning{Abbreviated paper title}
% If the paper title is too long for the running head, you can set
% an abbreviated paper title here
%

\author{Saswata Jana  \and
Partha Sarathi Mandal\orcidID{0000-0002-8632-5767} %\and
%Third Author\inst{3}\orcidID{2222--3333-4444-5555}
}
\authorrunning{Jana et al.}
% First names are abbreviated in the running head.
% If there are more than two authors, 'et al.' is used.
%
\institute{Indian Institute of Technology Guwahati, Guwahati - 781039, India\\
\email{\{saswatajana, psm\}@iitg.ac.in}\\
}
\maketitle              % typeset the header of the contribution
\begin{abstract}
The coordination among drones and ground vehicles for last-mile delivery 
has gained significant interest in recent years.
In this paper, we study \textit{multiple drone delivery scheduling problem} (MDSP) \cite{Betti_ICDCN22}  for last-
mile delivery, where we have a set of drones with an identical battery budget and a set of delivery locations, along  
with reward or profit for delivery, cost and delivery time intervals. The objective of the MDSP is to find a collection of conflict-free schedules for each drone such that the total profit for delivery is maximum subject to the battery constraint of the drones. Here we propose a fully polynomial time approximation scheme (FPTAS) for the single drone delivery scheduling problem (SDSP) and a $\frac{1}{4}$-approximation algorithm for MDSP with a constraint on the number of drones.
\keywords{Approximation Algorithm \and Drone Delivery Scheduling \and Truck  \and Last-mile Delivery System}
\end{abstract}

\section{Introduction}

\textit{Motivation}:
% In 2021, the annual growth of worldwide e-commerce sales increased by 16.8\% compared to the previous year, and around 20\% of all global retail sales accounted by e-commerce \cite{Statistica}. This rapid demand for commercial deliveries motivates the e-commerce giants to make more effective deliveries to the customers. In this delivery journey, \textit{last-mile delivery} \cite{LastMile} is the most expensive and time consuming process, where the product needs to deliver to the customer's doorstep from the distribution hub. This delivery requires a lot of human interaction. But, advances in drone technologies make a miniature, and this delivery system brings more impact in today's pandemic world. Big companies started their preparation for making productive parcel deliveries through drones \cite{amazon}. In addition, using key vehicles (like trucks, cars, vans, etc.) along with drones with their constraints enhance the profit and diminish the total delivery time interval. Furthermore, drones have enormous application in the field of defence and disaster response \cite{defence}, agriculture \cite{agriculture}, healthcare \cite{healthcare}, etc.
%In 2021, the annual growth of worldwide e-commerce sales increased by 16.8\% compared to the previous year, and around 20\% of all global retail sales accounted by e-commerce \cite{Statistica}.
The rapid demand for commercial deliveries motivates logistics to provide more efficient service to their customers. In delivery, the \textit{last-mile delivery} \cite{LastMile}, which involves collecting the goods from the distribution center to the customer's door, is the most costly and time-consuming part of this procedure. This type of delivery necessitates a significant amount of human interactions. However, advances in drone technology have created a miniature, and this delivery system has a more significant impact in today's pandemic world. Large corporations have started planning for efficient parcel delivery via drones \cite{amazon}. Also, drones can bypass traffic congestion on traditional networks by flying. This ability enables the drone to travel faster than the usual delivery truck and reduces transportation costs. Drones are carried by truck in the last mile delivery for better efficiency. In this collaboration, drones take off from the truck to deliver the package to customers and then return to the truck for the next delivery. In addition drones have enormous applications in the field of agriculture \cite{agriculture}, healthcare \cite{healthcare}, defense and disaster response \cite{defence}, resource monitoring and assessment \cite{Sibanda}, manufacturing \cite{Maghazei}, and many more.

\noindent \textit{Challenges:}
%Delivery of packages by drones in association with a truck creates a lot of challenges in our real-world scenario. For a given location of customers, we need to know the optimum route for the truck, the launching and rendezvous points of the drones and the truck for making a delivery. Also, the limited battery budget of the available drones in the market and the finite number of drones in the company's warehouse do not allow us to make a desirable number of deliveries. Various companies have introduced preferences among customers (like prime customers). For this reason, they want to maximize their profit by preferring those prime deliveries over others. In addition, they can't ship a drone for any set of deliveries altogether because of conflicts among delivery time intervals. At a time, it can deliver at most one package. All these complexities influence logistics to make those deliveries, for which the profit is maximized. 
In spite of the broad application of drones, it has certain limitations. A drone can only deliver small packages due to capacity constraint. Since customers' locations are geographically distributed, safety and reliability are critical issues. Package delivery by drones with the help of a truck has numerous challenges. For a given set of customer locations, we need to know the best route for the truck, the launching points for drones and rendezvous points for drones and the truck to make deliveries.
Furthermore, the limited battery budget of the available drones and the limited number of drones in the company's warehouse prevent us from making a desirable number of deliveries. Various companies have introduced preferences among customers (like prime customers). For this reason, they want to maximize their profit by preferring those prime deliveries over others. At a time, a drone can deliver at most one package. All these complexities influence logistics to make those deliveries, for which the profit is maximized. 

% But, for a large set of customers, create an algorithm to find the optimal route or to schedule the deliveries to a set of drones optimally is a big challenge. Furthermore, all these problems are NP-hard \cite{MURRAY201586}, \cite{Betti_ICDCN22}. This motivate us to look for some approximation algorithms towards the optimal results of the problems in polynomial time. In this approximation, FPTAS (fully polynomial time approximation scheme) is one of the best scheme. Whereas, FPTAS is an algorithm that takes instance of the maximization (minimization) problem and an arbitrary $\epsilon > 0$ as it's input and produces a solution within factor $(1 - \epsilon)$ ($(1 + \epsilon)$) of the optimal solution. The running time of this algorithm is polynomial in in both the instance size and $\frac{1}{\epsilon}$.

\noindent \textit{Drone delivery
scheduling problem:}
This paper considers last-mile delivery to customers with a truck moving in a prescribed route and a set of drones. For a given set of deliveries and their delivery time intervals, reward or profit for each delivery, and battery budget of the drones, the goal is to schedule the drones for the deliveries to make the total profit maximum. This problem was introduced by Sorbelli et al. \cite{Betti_ICDCN22} and proved that the problem is NP-Hard and proposed heuristic algorithms to solve the problems. In this paper, we offer two approximation algorithms as mentioned below.   
\subsection{Our Contributions}
In this paper, our contributions are the following.
\begin{itemize}
\item For the \textit{single drone delivery scheduling problem}, we propose an FPTAS with running time $\mathcal{O}$($n^{2}  \lfloor \frac{n}{\epsilon} \rfloor )$, for $\epsilon > 0$, where $n$ is the number of deliveries.
\item For the \textit{multiple drone delivery scheduling problem} with m drones, we propose an $\frac{1}{4}$-approximation algorithm with running time $\mathcal{O}$($n^{2}$) if $m \geq \Delta$, where $\Delta$ is the maximum degree of the interval graph constructed from the delivery intervals.
\end{itemize}

\subsection{Related Works}
Due to the drone's limited flying endurance capacity, collaboration with a truck allows for more effective customer deliveries. Various studies have been conducted in this area of co-operative delivery.

This type of delivery is taken into account when \textit{Murray and Chu} in \cite{MURRAY201586} first introduced \textit{flying sidekicks traveling salesman problem}, a more extension of the traveling salesman problem (TSP). In this model, drones can launch from the depot or any customer location to deliver the package. Then it returns to the truck at any customer location or the depot. In addition, the truck can also deliver the packages and make some synchrony for launching and receiving the drone. But, any customer can be served by either the truck or a drone. This paper aims to reduce the total time to make all the deliveries. They proposed an optimal \textit{mixed integer linear programming} (MILP) formulation and two effective heuristic solutions with numerical testing for this purpose.
Furthermore, \textit{Murray and Raj} in \cite{Murray2019TheMF} extended the above problem for multiple drone cases, referred to as \textit{multiple flying sidekicks traveling salesman problem}. Here, the authors offered MILP formulation for the problem and a heuristic solution with simulation. Using the solution for the TSP, \textit{Crisan and Nechita} in  \cite{CRISAN201938} suggested another competent heuristic for the \textit{flying sidekicks traveling salesman problem}. 

The mobility of drones came into the literature when \textit{Daknama and Kraus} in \cite{Daknama2017VehicleRW} took a policy for recharging the drones on the truck's roof. With this, a drone can deliver more packages than the usual method. In addition, a heuristic algorithm has been proposed by the authors for the schedule of the truck and drones. 
\textit{Mara et al.} in \cite{Mara} formulate flying sidekicks salesman problem with multiple drops. In this model, a drone can deliver multiple packages in a single sortie. Here they proposed a heuristic approach with effective testings for the problem.
\textit{Boysen et al.} in \cite{Boysen2018DroneDF} consider the delivery by drones only. Their objective is to find launch and rendezvous points of the truck and a drone for delivery such that the total makespan to complete all the deliveries becomes minimum. 
Also, \textit{Sorbelli et al.} in \cite{Betti_ICDCN22} proposed a model for delivery by drones only. Here they offered a Multiple Drone-Delivery Scheduling Problem (MDSP), where a truck and multiple drones cooperate among them-self to deliver a package in the last mile. The objective of this problem is to find the schedule for the available drones to maximize profit. 
The authors propose NP-hardness proof of the problem, ILP formulation, a heuristic algorithm for the single drone case, and two heuristic algorithms for the multiple drones case.
In this paper, we study these problems and propose two approximation algorithms; one for the single drone and the other for multiple drones.

%\noindent{\bf Roadmap.} 
The rest of the paper is organized as follows.
%Rest of this paper is organized as follows.
The model with problem formulation is presented in Section \ref{section:Model}. Approximation algorithm for single drone is proposed in Section \ref{section:singleDrone} and for multiple drone is presented in Section \ref{section:multiDrone}. Finally, we conclude the paper in Section \ref{section:conclusion}.
\vspace{-2mm}
\section{Model and Problem Formulation}
\label{section:Model}
\vspace{-3mm}
In this section, we describe the model of the Multiple Drone-Delivery Scheduling Problem (MDSP) and integer linear programming (ILP) formulation \cite{Betti_ICDCN22} of it. 
A table \ref{tab:notation} gives a list of variables with descriptions used in this paper at the end.
%The variable notation with their description is given in the Table \ref{tab:notation} at the end.

\vspace{-4mm}
\subsection{Model} Let $\mathcal{M} = \{1, 2, \cdots, m\}$ be the set of drones, each have equal battery budget $B$ and $\mathcal{N} = \{1, 2,\cdots, n\}$ be the set of customers or deliveries with prescribed location $\delta_{j}$ for each $j$ $\in$ $\mathcal{N}$. For each customer $j$, there are associate \textit{cost} $c_{j}$ and \textit{profit} $p_{j}$, which are incurred by a drone for completing the delivery $j$.    
Initially, all the drones are at the depot. 
A truck starts it's journey containing all the drones on a prescribed path for delivering customers.
To complete the delivery to the customer $j$ at the position $\delta_j$, a drone takes off from the truck at the given launch point ($\delta_j^L$), and after completing the delivery, it meets the truck again at the given rendezvous point ($\delta_j^R$).

Let at time $t_{0}$ the truck starts it's journey and $t_{j}^{L}$, $t_{j}^{R}$ be the time when it comes to the points $\delta_{j}^{L}$, $\delta_{j}^{R}$, respectively. So, $I_{j}$ = [$t_{j}^{L}$, $t_{j}^{R}$] is the \textit{delivery time interval} for the delivery $j$. Any delivery $j$ can be performed by at most one drone, and that drone cannot be assigned to any other delivery at the time interval $I_{j}$. Also, note that the truck cannot move back at any time, i.e., if $P$ and $Q$ are two points on the path of the truck, where $Q$ is located after $P$, then $t_{P}<t_{Q}$, where $t_{P}$ and $t_{Q}$ are the times when the truck passes through the points $P$ and $Q$, respectively.
Figure \ref{figure-1} shows an example of a drone-delivery model with eight delivery locations and two drones. The solid lines in the figure represent the paths of the truck, while the dotted lines represent the paths of the drone. 
\vspace{-8mm}
\begin{figure}[htp]
    \centering
    \includegraphics[width=10cm]{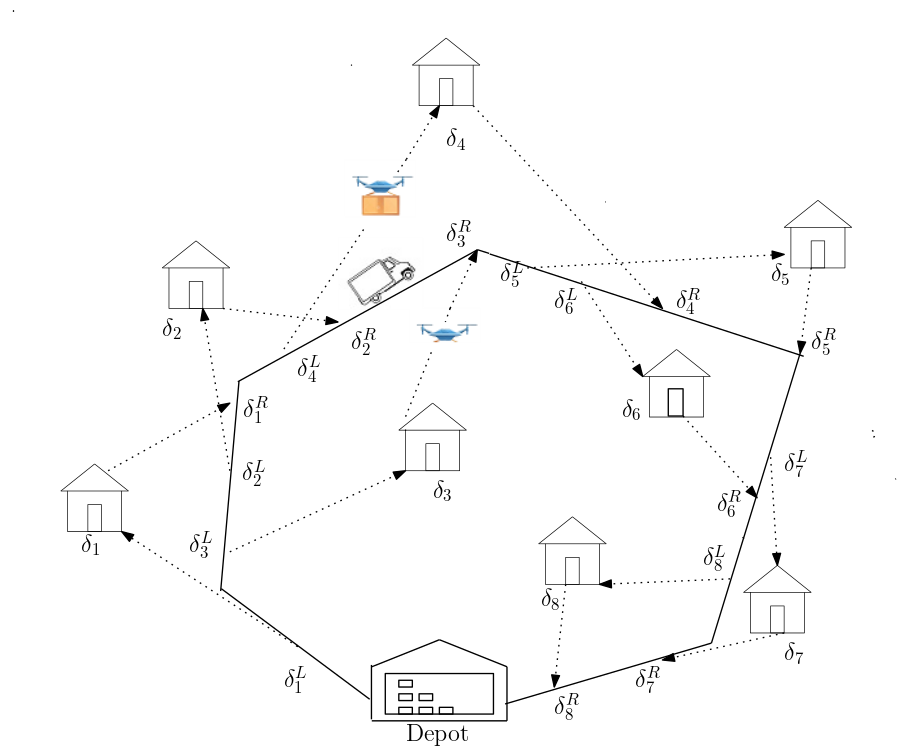}
    \caption{A drone-delivery model with paths specification.}
    \label{figure-1}
\end{figure}

Any two intervals $I_{j}$ and $I_{k}$ are said to be \textit{compatible} if $I_{j} \cap I_{k} = \emptyset$; else they are \textit{conflicting}, where $1\leq j \neq k \leq n$. Let $I = \{I_{1},I_{2}, \cdots, I_{n}\}$ be the set of delivery time intervals corresponding to the set of deliveries $\mathcal{N}$. Without loss of generality we assume that intervals $I_{1},I_{2}, \cdots, I_{n}$ are sorted in non-decreasing order according to their launching time, i.e., $t_{1}^{L}$ $\leq$ $t_{2}^{L}$ $\leq$, $\cdots$, $\leq$ $t_{n}^{L}$.
A set $S \subseteq I$ is said to be \textit{compatible} if all pair of intervals in it are \textit{compatible}. A compatible set $S$ is said to be \textit{feasible}, if the energy cost of $S$, which is defined as $\mathcal{W}$(S) = $\sum_{I_{j} \in S}^{}$ $c_{j} \leq B$. One drone $i \in \mathcal{M}$ can be utilized for multiple deliveries ($S$), provided $S$ is feasible and corresponding profit is $\mathcal{P}$(S) = $\sum_{I_{j} \in S}^{}$ $p_{j}$.
\begin{definition}
An \textit{assignment} $S$ of the deliveries  for a drone is defined as $S = \{I_{k_1}, I_{k_2}$, $ \cdots, I_{k_l}\} \subseteq I$, where $1 \leq k_i \leq n$.
\end{definition}

We are formally rewriting the Multiple Drone-Delivery Scheduling Problem definition \cite{Betti_ICDCN22} as follows.

\begin{problem}
\label{Problem 1}
(Multiple Drone-Delivery Scheduling Problem (MDSP))	\\
Given a set of drones $\mathcal{M}$ with identical battery budget $B$ and interval set $I$ corresponding to a set of deliveries $\mathcal{N}$, the objective is to find a family of assignment $\mathcal{S}^{*}$ = \{$S_{1}^{*}, S_{2}^{*}, \cdots, S_{m}^{*}$ \} corresponding to  $m$ drones such that $\sum_{i=1}^{m} \mathcal{P}(S_{i}^{*}$) is maximized, where $S_{i}^{*} \subseteq I$ is the feasible assignment corresponding to the drone $i \in \mathcal{M}$ and $S_{k}^{*} \cap S_{l}^{*} = \emptyset$; $\forall k, l: 1 \leq k \neq l \leq m$.
\end{problem}

\begin{definition}
An optimum profit function $f$ for $m$ drones is defined as f(m) = $\sum_{i=1}^{m} \mathcal{P}(S_{i}^{*}$).
\end{definition}
	
\subsection{ILP Formulation}
The MDSP is an NP-hard \cite{Betti_ICDCN22} problem, even for the single drone case. For the optimal algorithm, the problem is defined via ILP formulation according to \cite{Betti_ICDCN22} as follows:
Let,
	\begin{equation}
		\begin{split}
			x_{ij} & = 1 \text{, if delivery $j$ $\in$ $\mathcal{N}$ completed by the drone $i$ $\in$ $\mathcal{M}$} \\
			& = 0 \text{, otherwise}.
		\end{split}
	\end{equation}
	\vspace{-4mm}
\begin{equation}
	\max \sum_{i \in \mathcal{M}}^{} \sum_{j \in \mathcal{N}}^{}  p_{j} x_{ij} \label{eq:2}
\end{equation}
subject to
\begin{equation}
	\sum_{j \in \mathcal{N}}^{}  c_{j} x_{ij} \leq B, \text{    \hspace{80pt}     $\forall$ $i$ $\in$ $\mathcal{M}$} \label{eq:3}
	\end{equation}
	\vspace{-4mm}
%	\hspace{8mm}
\begin{equation}
	\sum_{i \in \mathcal{M}}^{} x_{ij} \leq 1, \text{ \hspace{80pt}   $\forall$ j $\in$ $\mathcal{N}$} \label{eq:4}
\end{equation}
\vspace{-4mm}
\begin{equation}
	x_{ij} + x_{ik} \leq 1, \text{  $\forall i \in \mathcal{M}$; $\forall j, k \in \mathcal{N}$: $I_{j} \cap I_{k} \neq \emptyset$} \label{eq:5}
\end{equation}
\vspace{-4mm}
\begin{equation}
	x_{ij} \in \{0,1\}, \text{ \hspace{60pt}    $\forall i \in \mathcal{M}$, $\forall j \in \mathcal{N}$}
\end{equation}
% \begin{}
% 	\sum_{j \in \mathcal{N}}^{}  c_{j} x_{ij} \leq B, \text{    \hspace{80pt}     $\forall$ $i$ $\in$ $\mathcal{M}$}\\
% 	\sum_{i \in \mathcal{M}}^{} x_{ij} \leq 1, \text{ \hspace{80pt}   $\forall$ j $\in$ $\mathcal{N}$} \\
% 	x_{ij} + x_{ik} \leq 1, \text{ \hspace{20pt}  $\forall i \in \mathcal{M}$; $\forall j, k \in \mathcal{N}$ such that $I_{j} \cap I_{k} \neq \phi$}
% \end{equation}
% \begin{equation}
% 	x_{ij} \in \{0,1\}, \text{ \hspace{80pt}    $\forall i \in \mathcal{M}$, $\forall j \in \mathcal{N}$}
% \end{equation}
The objective function (\ref{eq:2}) is about maximizing the profit by scheduling the deliveries with $m$ drones. Constraint (\ref{eq:3}) depicts that a drone cannot have an assignment that exceeds it's battery budget $B$. Constraint (\ref{eq:4}) says that a delivery can be completed by at most one drone. Constraint (\ref{eq:5}) tells that if two deliveries are in conflict, then a drone can choose at most one of them.

The aforementioned formulation is only suitable for solving the problem optimally for small-sized instances. But for large instances, it is not suitable. Hence, we propose following two approximation algorithms for solving the problem with large instances; one for a single drone and the other for multiple drones.
% The aforementioned formulation is only suitable
% for solving the problem optimally for small-sized instances. In the following, we propose two approximation algorithms for solving the problem for large instances in input. The first one is for the single drone case and the second one is for the multiple drone case. 
\vspace{-4mm}
\section{Delivery with Single Drone}
\label{section:singleDrone}
In this section, we design an approximation algorithm \textsc{(ApproxAlgoForSDSP)}
to solve the above problem for single drone case, i.e., $|\mathcal{M}| = 1$, which is referred as single drone-delivery scheduling problem (SDSP). 

According to the model, intervals $I_{1}, I_{2}, \cdots, I_{n}$ are sorted in non-decreasing order as per their launching time. Now, we can find the previous nearest conflict free interval $I_{L(j)}$ for each $I_{j} \in I$ by computing $L(j)$ as follows.

$L(j) := \max \{k : k \in \mathcal{N}$, $k < j$ and $t_{k}^{R} < t_{j}^{L}$\}, $\forall j \in \mathcal{N}$ and $L(j) = 0$ if there is no such $k$ exists for the delivery $j$.

This procedure gives us knowledge about which intervals we need to pull out from the previous assignment if we add the current interval to it.

Now, we design a pseudo polynomial dynamic programming (Algorithm \ref{Alg:EASDSP})  for SDSP and then an approximation algorithm (Algorithm \ref{Alg:AASDSP}) based on the FPTAS for the \textit{0-1 knapsack problem} \cite{Vazirani2003}.
\subsection{Dynamic Programming Formulation}
Let $P =\max_{j \in \mathcal{N}} (p_{j})$, then $nP$ is an upper bound of the optimal profit.
We define $S(j,p) \subseteq \{I_{1}, I_{2}, \cdots, I_{j}\}$,  a compatible subset of $I$ such that the cost $\mathcal{W}(S(j,p))$ is minimized subject to $\mathcal{P}(S(j,p)) = p$, where $1 \leq j \leq n$ and $0 \leq p \leq nP$. \\
Set $A(j,p) = \mathcal{W}(S(j,p))$.\\
\textbf{Basis}:- $A(1,p)$ $=$ $c_{1}$, if $p$ $=$ $p_{1}$; otherwise $A(1,p) = \infty$.\\
\phantom{x} \hspace{1cm} $A(j,0) = 0$; $\forall$ $j$ $\in$ $\mathcal{N}$ $\cup$ $\{0\}$.\\
\phantom{x} \hspace{1cm} $A(0,p) =$ $\infty$; $\forall p: 1 \leq p \leq nP$.\\
For a given $A(j,p)$ $( 1 \leq j < n$ and $0 \leq p \leq nP )$, we can compute $A(j+1,p)$ as follows.
\begin{description}
\item{\bf Case 1 ($I_{j+1} \in S(j+1,p)$):}
All the deliveries between $(L(j+1) + 1)$ and $j$ do not be the part of $S(j+1,p)$. So, for this case $A(j+1, p)$ $=$ $c_{j+1}$ + $A(L(j+1) ,p-p_{j+1}$), if $p$ $\geq$ $p_{j+1}$; else $A(j+1,p) = A(j,p)$.
\item{\bf Case 2 ($I_{j+1}$ $\notin$ $S(j+1,p)$):}
For this case we always have $A(j+1,p)$ $=$ $A(j,p)$.
\end{description}
From the above two cases, we can infer that
\begin{equation}
	\begin{split}
		A(j+1,p) & = \min \{A(j,p), (c_{j+1} + A(L(j+1),p-p_{j+1}))\}, \text{  if p $\geq$ $p_{j+1}$},\\
		& = A(j,p), \text{ otherwise}.
		\label{eq:7}
	\end{split}
\end{equation}
Using the above equation (\ref{eq:7}), we design an optimal dynamic programming algorithm (Algorithm \ref{Alg:EASDSP} (\textsc{DPAlgoForSDSP})) as follows. 
The optimal feasible intervals $\mathcal{S}^{*}$ $=$ \{$S^{*}$\} is an assignment for the drone to be delivered to make the profit maximum. 
\vspace{-4mm}
\begin{algorithm}[H]
\caption{(\textsc{DPAlgoForSDSP})}\label{Alg:EASDSP}
\begin{algorithmic}[1]
\For{$j\gets 1, n-1$}
\For{$p\gets 1, nP$}
\State compute $A(j+1,p)$ using the formula at the equation (\ref{eq:7}).
\EndFor
\EndFor
\State Find $p$: $p = \max \{p : A(n,p) \leq B\}$ \Comment{Maximum profit $p$ subject to the budget $B$}.
\State Find the feasible intervals ($S^{*}$) by backtracking to the array A for the value $p$.
\end{algorithmic}
\end{algorithm}
%If there is only one drone, i.e., $|\mathcal{M}| = 1$, we construct the above dynamic algorithm for the optimum solution.\\
%By traversing from backward, we can find the optimum feasible set $S^{*}$ $=$ \{$S_{1}^{*}$\}.\\
\vspace{-8mm}
\begin{theorem}
\label{Theorem-1}
Algorithm \ref{Alg:EASDSP} is a pseudo-polynomial time algorithm for SDSP.
\end{theorem}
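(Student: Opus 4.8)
The plan is to establish two things separately: that Algorithm~\ref{Alg:EASDSP} returns an optimal feasible assignment (correctness), and that its running time is polynomial in $n$ and in the numeric value $P$ (pseudo-polynomiality). For correctness, I would show by induction on $j$ that the recurrence \eqref{eq:7} correctly computes $A(j,p) = \mathcal{W}(S(j,p))$, i.e.\ the minimum energy cost over all compatible subsets of $\{I_1,\ldots,I_j\}$ whose profit is exactly $p$. The base cases $A(1,p)$, $A(j,0)$ and $A(0,p)$ follow directly from the definition of $S(j,p)$: profit $0$ is achieved by the empty set at cost $0$, an empty index range cannot realize a positive profit so the cost is $\infty$, and the single interval $I_1$ realizes only the profit $p_1$ at cost $c_1$. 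For the inductive step I would fix a profit target $p$ and an optimal witness $S(j+1,p)$ and split on whether it contains $I_{j+1}$.

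The crux of the argument, and the step I expect to be the main obstacle, is justifying Case~1. I must show that whenever $I_{j+1}\in S(j+1,p)$, none of the intervals $I_{L(j+1)+1},\ldots,I_j$ can simultaneously lie in $S(j+1,p)$. This follows from the sorting convention $t_1^L\le\cdots\le t_n^L$ together with the maximality in the definition of $L(j+1)$: for any $k$ with $L(j+1)<k\le j$, maximality forces $t_k^R\ge t_{j+1}^L$, while sorting gives $t_k^L\le t_{j+1}^L$, so $t_{j+1}^L\in I_k\cap I_{j+1}$ and the two intervals conflict. Hence the remainder $S(j+1,p)\setminus\{I_{j+1}\}$ is a compatible subset of $\{I_1,\ldots,I_{L(j+1)}\}$ of profit $p-p_{j+1}$, and minimizing its cost yields exactly $c_{j+1}+A(L(j+1),p-p_{j+1})$ by the inductive hypothesis. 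Case~2 is immediate, since $I_{j+1}\notin S(j+1,p)$ forces $S(j+1,p)\subseteq\{I_1,\ldots,I_j\}$ and thus $A(j+1,p)=A(j,p)$. Taking the smaller of the two candidate values gives \eqref{eq:7}. Finally, since $nP$ upper-bounds the optimal profit, line~6 selects the largest profit achievable within budget $B$, and the backtracking in line~7 recovers a corresponding feasible assignment $S^{*}$; this establishes that the algorithm solves SDSP optimally.

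For the running time, I would observe that the table $A$ has $n\cdot(nP+1)$ entries, each computable in $\mathcal{O}(1)$ time once the values $L(1),\ldots,L(n)$ are precomputed, which costs $\mathcal{O}(n\log n)$ (or $\mathcal{O}(n)$) using the sorted order, and that the backtracking step is $\mathcal{O}(n)$. This yields a total running time of $\mathcal{O}(n^2 P)$. Because this bound is polynomial in $n$ and in the \emph{value} $P$ but not in its bit-length $\log P$, Algorithm~\ref{Alg:EASDSP} is a pseudo-polynomial time algorithm for SDSP, as claimed.
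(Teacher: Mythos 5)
Your proposal is correct and follows essentially the same route as the paper: the same recurrence \eqref{eq:7}, the same $\mathcal{O}(n\log n)$ precomputation of $L(\cdot)$, and the same $\mathcal{O}(n^2 P)$ table-filling bound. The only difference is one of detail: the paper simply asserts that ``correctness follows from equation \eqref{eq:7},'' whereas you supply the inductive argument — in particular the observation that maximality of $L(j+1)$ together with the sorted launch times forces every $I_k$ with $L(j+1)<k\le j$ to conflict with $I_{j+1}$ — which is exactly the step the paper leaves implicit.
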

\begin{proof}
  We can find the $L(j)$ for each $j$ $\in$ $\mathcal{N}$ in $\mathcal{O}(n \log n)$ time. For Algorithm \ref{Alg:EASDSP} we can compute the 2D array $(A)$ by $\mathcal{O}$($n^{2}$P) time. Also, step 6 of this algorithm can be done by $\mathcal{O}(nP)$ time. 
So, overall running time is $\mathcal{O}$($n^{2}P$) and correctness follows from the equation (\ref{eq:7}).\\ 
Thus this is a pseudo-polynomial time algorithm.  \qed
\end{proof}

Now, we propose a fully polynomial time approximation scheme (FPTAS) which is described via Algorithm  \ref{Alg:AASDSP} (\textsc{ApproxAlgoForSDSP}) to solve SDSP. Algorithm \ref{Alg:AASDSP} uses Algorithm \ref{Alg:EASDSP} as a subroutine.

\vspace{-4mm}
\begin{algorithm}
\caption{(\textsc{ApproxAlgoForSDSP})}\label{Alg:AASDSP}
\begin{algorithmic}[1]
\State Set $K$ $=$ $\frac{\epsilon P}{n}$, for $\epsilon$ $> 0$.
\State Set a new profit $p'_{j} = \lfloor \frac{p_{j}}{K} \rfloor$ for each delivery $j \in \mathcal{N}$.
\State Find the optimum solution ($S'$) of the same problem (SDSP) with the new profit $p'_{j}$ ($j \in \mathcal{N}$), using Algorithm \ref{Alg:EASDSP}.
\end{algorithmic}
\end{algorithm}
\vspace{-4mm}
\begin{theorem}
\label{Theorem-2}
Algorithm \ref{Alg:AASDSP} gives an FPTAS for SDSP with running time $\mathcal{O}$($n^{2}  \lfloor \frac{n}{\epsilon} \rfloor )$ for $\epsilon$ > 0.
\end{theorem}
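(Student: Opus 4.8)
The plan is to verify the two defining properties of an FPTAS separately: that Algorithm~\ref{Alg:AASDSP} runs in time polynomial in both $n$ and $1/\epsilon$, and that the profit it returns is at least $(1-\epsilon)$ times the optimum. The whole argument rides on one structural observation: feasibility of a set $S$ (pairwise compatibility together with $\mathcal{W}(S) \leq B$) depends only on the intervals and the costs $c_j$, never on the profits. Hence rescaling the profits leaves the family of feasible assignments completely unchanged, which is exactly what lets the classical profit-scaling FPTAS for $0$--$1$ knapsack \cite{Vazirani2003} carry over to SDSP.

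For the running time, I would invoke Theorem~\ref{Theorem-1}: on the scaled instance the dynamic program of Algorithm~\ref{Alg:EASDSP} costs $\mathcal{O}(n^2 P')$, where $P' = \max_{j} p'_j$ is the largest scaled profit. Since $p'_j = \lfloor p_j / K \rfloor$ and $K = \epsilon P / n$, the maximum scaled profit is $P' = \lfloor P/K \rfloor = \lfloor n/\epsilon \rfloor$. Substituting gives the claimed $\mathcal{O}(n^2 \lfloor n/\epsilon \rfloor)$ bound, which is polynomial in $n$ and $1/\epsilon$; the rescaling in Steps~1--2 is only $\mathcal{O}(n)$ extra work.

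For the approximation ratio, let $S^*$ be an optimal assignment under the original profits and let $S'$ be the assignment Algorithm~\ref{Alg:AASDSP} returns, which by construction is optimal under the scaled profits. The two elementary bounds I would use are $K p'_j \leq p_j$ and $p_j - K < K p'_j$, both immediate from the definition of the floor. Optimality of $S'$ on the scaled instance gives $\sum_{j \in S'} p'_j \geq \sum_{j \in S^*} p'_j$ (legitimate because $S^*$ is still feasible after rescaling). Chaining these yields
\[
\mathcal{P}(S') \;\geq\; K\!\sum_{j \in S'} p'_j \;\geq\; K\!\sum_{j \in S^*} p'_j \;>\; \sum_{j \in S^*}(p_j - K) \;\geq\; \mathcal{P}(S^*) - nK,
\]
using $|S^*| \leq n$ in the last step. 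Since $nK = \epsilon P$, this reads $\mathcal{P}(S') \geq \mathcal{P}(S^*) - \epsilon P$.

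The main obstacle---really the only subtle point---is converting the additive loss $\epsilon P$ into a multiplicative $(1-\epsilon)$ guarantee, which requires $\mathcal{P}(S^*) \geq P$. This is not automatic: if the single most profitable delivery had cost exceeding $B$, it could never appear in any assignment and would inflate $P$ spuriously. I would handle this by a harmless preprocessing step that discards every delivery with $c_j > B$ (such deliveries are infeasible on their own and hence useless), after which the highest-profit surviving delivery is itself a feasible singleton assignment, forcing $\mathcal{P}(S^*) \geq P$. With that in hand, $\mathcal{P}(S') \geq \mathcal{P}(S^*) - \epsilon P \geq (1-\epsilon)\mathcal{P}(S^*)$, which together with the running-time bound establishes the FPTAS.
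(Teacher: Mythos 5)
Your proposal is correct and follows essentially the same route as the paper: scale profits by $K = \epsilon P/n$, solve the scaled instance with the pseudo-polynomial dynamic program, and chain $\mathcal{P}(S') \geq K\,\mathcal{P}'(S') \geq K\,\mathcal{P}'(S^{*}) \geq \mathcal{P}(S^{*}) - nK = \mathcal{P}(S^{*}) - \epsilon P$, with the identical running-time substitution $\lfloor P/K \rfloor = \lfloor n/\epsilon \rfloor$. The one place you go beyond the paper is the step $\mathcal{P}(S^{*}) \geq P$: the paper asserts it without justification (its assumption $c_j \leq B$ appears only later, in the MDSP section), whereas your preprocessing that discards deliveries with $c_j > B$ makes this inequality rigorous --- a small but genuine tightening of the published argument.
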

\begin{proof}
Let $S^{*}$ and $S'$ be the optimum feasible intervals
for the SDSP with profit values $p_{j}$ and  $p'_{j}$ ($\forall j \in \mathcal{N})$, respectively. 
%Whereas $p(j)$ $=$ $p_{j}$ and $p'(j)$ = $p'_{j}$, $\forall$ $j$ $\in$ $\mathcal{N}$.\\

Also let $\mathcal{P} (S^{*})$ $=$ $\sum_{j \in S^{*}}^{}$ $p_{j}$ ; $\mathcal{P}' (S^{*})$ $=$ $\sum_{j \in S^{*}}^{}$ $p'_{j}$ ; $\mathcal{P}' (S')$ $=$ $\sum_{j \in S'}$ $p'_{j}$.\\
As ($p_{j}$ - $K.p'_{j}$) $\leq$ $K$, we have $\mathcal{P}(S^{*})$ - $K \mathcal{P}'(S^{*})$ $\leq$ $nK$.\\
As Algorithm \ref{Alg:AASDSP} gives us the optimum solution for the problem with new profit values and $S'$, $S^{*}$ both are being feasible sets, $\mathcal{P}' (S')$ $\geq$ $\mathcal{P}' (S^{*})$.\\
Also we have $\mathcal{P} (S') \geq K.\mathcal{P}' (S')$, as $p_{j} \geq K.p'_{j}$.
Thus,
\begin{equation}
\begin{split}
		\mathcal{P} (S') & \geq K.\mathcal{P}' (S')	\geq K.\mathcal{P}' (S^{*})
		 \geq \mathcal{P} (S^{*}) - nK
		 = \mathcal{P} (S^{*}) - \epsilon P\\
		& \geq \mathcal{P} (S^{*}) - \epsilon \mathcal{P} (S^{*}), \text{ as $\mathcal{P}$ ($S^{*}$) $\geq$ P}\\
	&	 = (1 - \epsilon) \mathcal{P} (S^{*})
\end{split}
\end{equation}
The running time for this algorithm is $\mathcal{O}$($n^{2}  \lfloor
\frac{P}{K} \rfloor )$ $=$ $\mathcal{O}$($n^{2}  \lfloor
\frac{n}{\epsilon} \rfloor )$ follows from Theorem \ref{Theorem-1}.
Hence, Algorithm \ref{Alg:AASDSP} is an FPTAS  for the SDSP.
\qed
\end{proof}

\section{Delivery with Multiple Drones}
\label{section:multiDrone}
In this section, we formulate an approximation algorithm for the  \textit{multiple drone delivery scheduling problem} (MDSP). 
%i.e., $|\mathcal{M}| > 1$, 
We assume that for each delivery $j \in \mathcal{N}$ associate cost, $c_{j} \leq B$, otherwise it will not be part of any feasible solution.
Please note that, unlike SDSP, MDSP does not have an FPTAS, as stated below.

\begin{theorem}
MDSP does not admit an FPTAS. 
\end{theorem}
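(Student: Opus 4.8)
The plan is to show that MDSP does not admit an FPTAS by exhibiting a reduction from a strongly NP-hard problem, which rules out an FPTAS unless $\mathrm{P} = \mathrm{NP}$. The standard tool here is the well-known fact that a problem whose optimal value is polynomially bounded in the input size (or that is strongly NP-hard) cannot have an FPTAS unless $\mathrm{P} = \mathrm{NP}$, since an FPTAS with $\epsilon$ chosen small enough would solve the problem exactly in polynomial time. My strategy would be to embed a strongly NP-hard packing/scheduling problem into MDSP and argue along these lines.

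First I would identify a suitable strongly NP-hard source problem. A natural candidate is \textsc{3-Partition} (or \textsc{Bin Packing} with a fixed number of bins that grows with the input), which is strongly NP-hard and maps cleanly onto the structure of MDSP: the $m$ drones play the role of bins with capacity $B$, and the deliveries play the role of items with sizes $c_j$. The key feature of MDSP that makes this reduction work is that the battery-budget constraint \eqref{eq:3} is exactly a bin-capacity constraint, while the conflict constraint \eqref{eq:5} and the at-most-one-drone constraint \eqref{eq:4} can be neutralized in the constructed instance. Specifically, I would take all delivery intervals to be mutually compatible (pairwise disjoint), so that conflict constraint \eqref{eq:5} is vacuous and the only binding restriction is the per-drone budget. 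Setting $p_j = c_j$ for every delivery then makes total profit equal to total packed cost, and the question of whether all items can be packed into the $m$ drones becomes a \textsc{3-Partition}-type feasibility question.

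The core argument is then the contradiction step. Suppose, for contradiction, that MDSP admits an FPTAS. Because profits are set equal to integer item sizes, the optimal profit $f(m)$ is an integer bounded by a polynomial in the input numbers. I would choose $\epsilon$ small enough that $\epsilon \cdot f(m) < 1$; since the optimum is integral, any $(1-\epsilon)$-approximate solution must in fact be optimal, so the FPTAS would compute the exact optimum. By the construction, deciding whether the optimum equals the total sum $\sum_j c_j$ answers the underlying strongly NP-hard partition question in polynomial time, forcing $\mathrm{P} = \mathrm{NP}$. The cleanest way to guarantee the required $\epsilon$ is achievable in polynomial time is to bound the optimum by a polynomial in the \emph{unary} size of the instance, which is exactly what strong NP-hardness provides.

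The main obstacle I anticipate is making the reduction genuinely respect \emph{all} the structural constraints of MDSP simultaneously while keeping the numbers polynomially bounded. In particular, I must ensure the compatible-interval gadget is consistent with the sorted, forward-only truck-path model (the launch/rendezvous timing with $t_j^L \le t_j^R$ and the monotone truck route), so that the constructed intervals are actually realizable as a valid MDSP instance rather than an abstract packing instance. I would handle this by placing the disjoint intervals sequentially along the truck's route, which trivially satisfies the monotonicity requirement. A secondary technical point is confirming that the chosen source problem is strongly NP-hard (so its values may be kept polynomially small), rather than merely NP-hard in the weak sense like ordinary \textsc{Knapsack}; using \textsc{3-Partition} or \textsc{Bin Packing} avoids the pitfall that a weakly NP-hard problem could itself admit an FPTAS and thus fail to yield the desired contradiction.
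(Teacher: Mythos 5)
Your proposal is correct, but it takes a genuinely different route from the paper. The paper's proof is essentially a citation: observe that MDSP with all intervals pairwise disjoint is exactly the multiple knapsack problem (MKP) with identical capacities, invoke the result of Chekuri et al.\ that MKP admits no FPTAS unless $\mathrm{P}=\mathrm{NP}$, and conclude that the more general MDSP admits none either. You instead give a self-contained hardness argument: embed a strongly NP-hard packing problem (3-Partition) into MDSP via pairwise disjoint intervals placed sequentially along the truck route, keep the optimum integral and polynomially bounded (which strong NP-hardness guarantees), and note that an FPTAS run with $\epsilon$ chosen so that $\epsilon \cdot f(m) < 1$ would compute the exact optimum in polynomial time and thereby decide 3-Partition. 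Both proofs neutralize the conflict constraint in the same way (disjoint intervals); the difference lies in where the inapproximability comes from. Your approach buys self-containedness — no reliance on an external inapproximability theorem — and you correctly flag and avoid the weak-NP-hardness pitfall (a naive reduction from ordinary Partition with profits equal to sizes fails, since the required $1/\epsilon$ would be exponential in the input size). The paper's approach buys brevity and a strictly stronger conclusion: the cited MKP result rules out an FPTAS even for a \emph{fixed} number of knapsacks ($m=2$ identical bins), whereas your 3-Partition reduction needs $m$ to grow with the input ($m$ on the order of $n/3$), so it leaves open the possibility of an FPTAS for MDSP with constantly many drones — indeed, for fixed $m$ and polynomially bounded numbers the packing feasibility question you rely on is solvable by dynamic programming. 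Since the theorem as stated concerns general MDSP with $m$ part of the input, your argument does establish it.
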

\begin{proof}
MDSP is a generalized version of \textit{multiple knapsack problem} (MKP). If we remove the conflicting condition of MDSP, it becomes MKP. Chekuri et al. \cite{fptas} proved that MKP does not admit FPTAS. From this, it is straightforward to conclude that  MDSP also does not admit FPTAS. \qed 
\end{proof}
In the following part, we propose a constant factor approximation algorithm to solve the MDSP with a restriction on the number of drones.
Now, we define some definitions and lemmata for the MDSP.

\begin{definition}
    The \textit{density} of a delivery $j \in \mathcal{N}$ defined as the ratio between profit ($p_j$) and cost ($c_j$), which is denoted by $d_{j}$ $=$ $\frac{p_{j}}{c_{j}}$.
\end{definition}
\begin{definition}
An assignment $S = \{I_{k_1}, I_{k_2}, \cdots, I_{k_l}\} \subseteq I$ of the deliveries with density $d_{k_1} \geq d_{k_2} \geq \cdots \geq d_{k_l}$, is called a \textit{critical assignment} for a drone with battery budget $B$, if $\mathcal{W}(S) > B$ and $\mathcal{W}(S \setminus \{k_{l}\}) \leq B$.
\end{definition}
We can state the following lemma based on \cite{knapsack}.

\begin{lemma}
\label{lemma-1}
If $S = \{I_{k_1}, I_{k_2}, \cdots, I_{k_l}\} \subseteq I$ with $l$ largest densities $d_{k_1} \geq d_{k_2} \geq \cdots \geq d_{k_l}$ is a compatible and critical assignment for any SDSP  then $\mathcal{P}(S) \geq f(1).$
\end{lemma}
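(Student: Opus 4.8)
The plan is to read Lemma~\ref{lemma-1} as an instance of the classical density bound for knapsack obtained through the fractional relaxation. First I would fix an optimal single-drone assignment $S^{*}$, so that $f(1) = \mathcal{P}(S^{*})$ with $S^{*}$ compatible and $\mathcal{W}(S^{*}) \leq B$. The opening observation is that discarding the interval-compatibility requirement can only enlarge the feasible region: $S^{*}$ is, in particular, a feasible solution of the ordinary $0$--$1$ knapsack instance on the items of $I$ with profits $p_{j}$, costs $c_{j}$ and capacity $B$. Hence $f(1) \leq \mathrm{OPT}_{\mathrm{KP}}$, where $\mathrm{OPT}_{\mathrm{KP}}$ is the optimum of that knapsack instance, and passing to the linear (fractional) relaxation gives a further $\mathrm{OPT}_{\mathrm{KP}} \leq \mathrm{OPT}_{\mathrm{LP}}$.

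Next I would evaluate $\mathrm{OPT}_{\mathrm{LP}}$ explicitly using the standard fact that the fractional optimum is attained by the greedy rule that inserts items in non-increasing density order. This is exactly where the hypotheses of the lemma are consumed: since $d_{k_{1}} \geq d_{k_{2}} \geq \cdots \geq d_{k_{l}}$ are the $l$ largest densities in $I$, the greedy fractional packing draws precisely on $I_{k_{1}}, \ldots, I_{k_{l}}$. The criticality condition $\mathcal{W}(S \setminus \{k_{l}\}) \leq B < \mathcal{W}(S)$ then tells me that the first $l-1$ items are packed in full and only a fraction $\alpha = (B - \mathcal{W}(S \setminus \{k_{l}\}))/c_{k_{l}}$ of the last item $I_{k_{l}}$ is used, where $0 \leq \alpha < 1$; consequently $\mathrm{OPT}_{\mathrm{LP}} = \sum_{i=1}^{l-1} p_{k_{i}} + \alpha\, p_{k_{l}}$. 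Because $\alpha \leq 1$ and profits are nonnegative, I would bound $\mathrm{OPT}_{\mathrm{LP}} \leq \sum_{i=1}^{l} p_{k_{i}} = \mathcal{P}(S)$, and then chain the three inequalities to conclude $f(1) \leq \mathrm{OPT}_{\mathrm{KP}} \leq \mathrm{OPT}_{\mathrm{LP}} \leq \mathcal{P}(S)$, which is the assertion.

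I expect the only delicate point to be the justification that the fractional optimum is supported exactly on the critical prefix $I_{k_{1}}, \ldots, I_{k_{l}}$, and not on some other collection of items. This step genuinely relies on $S$ being formed from the globally $l$ largest densities in $I$ (so that the density-greedy fractional packing has no denser item available to prefer) together with criticality (so that the packing stops strictly inside $I_{k_{l}}$ with fraction at most one). Once that identification is secured, the remainder is the routine observation that relaxing the compatibility constraint and then relaxing integrality can each only increase the optimum, so no further estimates are needed.
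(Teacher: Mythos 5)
Your proof is correct, but it follows a genuinely different route from the paper's. The paper argues by contradiction with a direct exchange argument: assuming $\mathcal{P}(S) < f(1)$ and letting $T$ be an optimal SDSP assignment, it passes to the symmetric differences, noting that $\mathcal{P}(S\setminus T) < \mathcal{P}(T\setminus S)$ while $d(S\setminus T) \geq d(T\setminus S)$ (because $S$ contains the $l$ globally largest densities), hence $\mathcal{W}(S\setminus T) < \mathcal{W}(T\setminus S)$ and so $\mathcal{W}(S) < \mathcal{W}(T) \leq B$, contradicting criticality of $S$. You instead relax twice --- first dropping the compatibility constraint to obtain an ordinary $0$--$1$ knapsack instance, then dropping integrality --- and evaluate the fractional optimum via the classical greedy rule, which under your hypotheses packs $I_{k_1},\dots,I_{k_{l-1}}$ fully plus a fraction $\alpha<1$ of $I_{k_l}$, yielding the chain $f(1) \leq \mathrm{OPT}_{\mathrm{KP}} \leq \mathrm{OPT}_{\mathrm{LP}} \leq \mathcal{P}(S)$. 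Your route is more modular and in fact proves something slightly stronger (the bound holds even against the LP relaxation of the knapsack instance), at the price of invoking the fractional-knapsack greedy theorem as a black box; the paper's exchange argument is self-contained and, more importantly, transfers essentially verbatim to the multi-drone setting (Theorem \ref{Theorem-4} is proved by the same comparison with aggregate capacity $mB$), which is presumably why the authors phrase it that way. One small wrinkle in your write-up: if items outside $S$ tie with $d_{k_l}$ in density, the fractional optimum need not be supported \emph{exactly} on your prefix; what you need, and all you need, is that \emph{some} optimal fractional solution is so supported (greedy with ties broken in favour of $S$), and since the LP optimal value is unique this does not affect the bound --- but the claim as stated should be softened accordingly.
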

\begin{proof}
We prove the lemma by contradiction.
In contradictory, let $\mathcal{P}(S) < f(1)$.\\
Let $T$ be the optimal assignment for the chosen SDSP, i.e., $\mathcal{P}(T) = f(1)$.\\
Then $\mathcal{P}(S) < \mathcal{P}(T)$ implies $\mathcal{P}(S \setminus T) < \mathcal{P}(T \setminus S)$.\\
But we have $d(S \setminus T) \geq d(T \setminus S)$, implies $\mathcal{W}(S \setminus T) < \mathcal{W}(T \setminus S)$, implies $\mathcal{W}(S) < \mathcal{W}(T)$, which is a contradiction as $\mathcal{W}(T) \leq B$ and $\mathcal{W}(S) > B$.\\
Hence, the statement of the lemma follows. \qed
\end{proof}
From the above lemma, we can state the following result.
\begin{corollary}
\label{corollary-1}
 $\max(\mathcal{P}(\{I_{k_{l}}\}), \mathcal{P}(S \setminus \{I_{k_{l}}\})) \geq \frac{f(1)}{2}$.
\end{corollary}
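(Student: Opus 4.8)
The plan is to combine the additivity of the profit function with the lower bound supplied by Lemma~\ref{lemma-1}. The key observation is that $\{I_{k_l}\}$ and $S \setminus \{I_{k_l}\}$ form a partition of the critical assignment $S$, so by the definition $\mathcal{P}(S) = \sum_{I_j \in S} p_j$ the profit decomposes additively:
\begin{equation}
\mathcal{P}(S) = \mathcal{P}(\{I_{k_l}\}) + \mathcal{P}(S \setminus \{I_{k_l}\}).
\end{equation}

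From here I would invoke Lemma~\ref{lemma-1}, which guarantees $\mathcal{P}(S) \geq f(1)$ for the compatible critical assignment $S$. Substituting this into the decomposition above shows that the two summands together sum to at least $f(1)$. A standard averaging (pigeonhole) argument then finishes the proof: whenever two quantities add up to at least $f(1)$, the larger of the two must itself be at least $\frac{f(1)}{2}$, which is precisely the claimed inequality.

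Since the statement is essentially an immediate consequence of the preceding lemma, I do not anticipate any genuine obstacle. The only point that needs a moment of care is that the averaging step relies on both $\mathcal{P}(\{I_{k_l}\})$ and $\mathcal{P}(S \setminus \{I_{k_l}\})$ being non-negative; this is ensured because every delivery profit $p_j$ is non-negative, so neither summand can pull the maximum below half of the total. With that noted, the corollary follows directly.
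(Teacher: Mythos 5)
Your proof is correct and matches the paper's intent: the paper states this corollary as an immediate consequence of Lemma~\ref{lemma-1}, and the intended argument is exactly your decomposition $\mathcal{P}(S) = \mathcal{P}(\{I_{k_l}\}) + \mathcal{P}(S \setminus \{I_{k_l}\})$ followed by averaging. (One small note: the averaging step $\max(a,b) \geq \frac{a+b}{2}$ holds for arbitrary reals, so the non-negativity caveat, while true here, is not actually needed.)
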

\begin{lemma}
If $\mathcal{S} = \{S_{1}, S_{2}, \cdots, S_{m}\}$, with first $\sum_{i=1}^{m} |S_{i}|$ many largest density intervals, is a collection of assignment for MDSP with $m$ many drones, where $S_i$ is a compatible and critical assignment for the drone $i$ $(1 \leq i \leq m)$ and  $S_k \cap S_l = \emptyset$; $\forall k, l: 1 \leq k \neq l \leq m$, then $\mathcal{P}(\mathcal{S}) = \sum_{i=1}^{m} \mathcal{P}(S_{i}) \geq f(m)$ holds.
\end{lemma}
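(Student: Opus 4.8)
The plan is to mimic the single-drone argument of Lemma~\ref{lemma-1}, but applied to the \emph{unions} of the per-drone assignments rather than to one assignment. First I would fix an optimal family $\mathcal{S}^{*} = \{S_{1}^{*}, S_{2}^{*}, \cdots, S_{m}^{*}\}$ for the MDSP with $m$ drones, so that $\mathcal{P}(\mathcal{S}^{*}) = f(m)$, and set $\widehat{S} = \bigcup_{i=1}^{m} S_{i}$ and $\widehat{T} = \bigcup_{i=1}^{m} S_{i}^{*}$. Since the $S_{i}$ are pairwise disjoint, and likewise the $S_{i}^{*}$, we get $\mathcal{P}(\widehat{S}) = \sum_{i} \mathcal{P}(S_{i}) = \mathcal{P}(\mathcal{S})$ and $\mathcal{P}(\widehat{T}) = \sum_{i} \mathcal{P}(S_{i}^{*}) = f(m)$. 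This reduces the claim $\mathcal{P}(\mathcal{S}) \geq f(m)$ to the single inequality $\mathcal{P}(\widehat{S}) \geq \mathcal{P}(\widehat{T})$, which I would prove by contradiction, assuming $\mathcal{P}(\widehat{S}) < \mathcal{P}(\widehat{T})$.

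The heart of the argument is the same density exchange as in Lemma~\ref{lemma-1}, now with $\widehat{S}$ and $\widehat{T}$ in place of $S$ and $T$. By hypothesis $\widehat{S}$ consists of the $\sum_{i} |S_{i}|$ globally largest-density intervals, so every interval in $\widehat{S} \setminus \widehat{T}$ has density at least that of every interval in $\widehat{T} \setminus \widehat{S}$, i.e. $d(\widehat{S} \setminus \widehat{T}) \geq d(\widehat{T} \setminus \widehat{S})$. Cancelling the common part $\widehat{S} \cap \widehat{T}$ from $\mathcal{P}(\widehat{S}) < \mathcal{P}(\widehat{T})$ yields $\mathcal{P}(\widehat{S} \setminus \widehat{T}) < \mathcal{P}(\widehat{T} \setminus \widehat{S})$; combining this with the density inequality (writing $p_{j} = d_{j} c_{j}$ and bounding the left side below by its minimum density and the right side above by its maximum density) gives $\mathcal{W}(\widehat{S} \setminus \widehat{T}) < \mathcal{W}(\widehat{T} \setminus \widehat{S})$, hence $\mathcal{W}(\widehat{S}) < \mathcal{W}(\widehat{T})$.

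To close the contradiction I would use the two-sided weight bound that the multiple-drone setting supplies. Because each $S_{i}$ is a \emph{critical} assignment, $\mathcal{W}(S_{i}) > B$, so $\mathcal{W}(\widehat{S}) = \sum_{i=1}^{m} \mathcal{W}(S_{i}) > mB$; because each $S_{i}^{*}$ is \emph{feasible}, $\mathcal{W}(S_{i}^{*}) \leq B$, so $\mathcal{W}(\widehat{T}) = \sum_{i=1}^{m} \mathcal{W}(S_{i}^{*}) \leq mB$. Together these give $\mathcal{W}(\widehat{S}) > mB \geq \mathcal{W}(\widehat{T})$, contradicting $\mathcal{W}(\widehat{S}) < \mathcal{W}(\widehat{T})$. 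Hence $\mathcal{P}(\widehat{S}) \geq \mathcal{P}(\widehat{T})$ and the lemma follows.

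The step I expect to be the main obstacle is the density-to-weight deduction when $\widehat{S}$ and $\widehat{T}$ are taken as plain unions: unlike the single-drone case, neither union need be a compatible set, so I must ensure the exchange argument uses only the global density ordering and the additivity of $\mathcal{W}$ and $\mathcal{P}$ over arbitrary index sets, and never invokes compatibility of $\widehat{S}$ or $\widehat{T}$. I would also dispatch the degenerate case $\widehat{S} \setminus \widehat{T} = \emptyset$ directly, where $\mathcal{W}(\widehat{S} \setminus \widehat{T}) = 0 < \mathcal{W}(\widehat{T} \setminus \widehat{S})$ holds trivially, and note that the densities are well defined and positive since $c_{j} > 0$.
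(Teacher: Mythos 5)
Your proposal is correct and matches the paper's intended argument: the paper proves this lemma by the same density-exchange contradiction as Lemma~\ref{lemma-1} (spelled out explicitly in the proof of Theorem~\ref{Theorem-4}), i.e., taking unions $S=\cup_i S_i$ and $T=\cup_i S_i^*$, deducing $\mathcal{W}(S)<\mathcal{W}(T)$ from $\mathcal{P}(S)<\mathcal{P}(T)$ and $d(S\setminus T)\geq d(T\setminus S)$, and contradicting $\mathcal{W}(S)>mB\geq\mathcal{W}(T)$. Your treatment is in fact more careful than the paper's one-line proof, since you justify the density-to-weight step and note that compatibility of the unions is never needed.
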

\begin{proof}
Similar to the lemma \ref{lemma-1}. \qed
\end{proof}
\begin{theorem}
\label{Theorem-4}
Let, $\mathcal{S} = \{S_{1}, S_{2}, \cdots, S_{m+\Delta}\}$ be a collection of assignment for MDSP with ($m + \Delta$) many drones, where $S_i$ is a compatible assignment for the drone $i$ $(1 \leq i \leq m + \Delta)$ and $S_k \cap S_l = \emptyset$; $\forall k, l: 1 \leq k \neq l \leq m + \Delta$. If $\mathcal{S}$ is the set of first $\sum_{i=1}^{m + \Delta} |S_{i}|$ many largest density interval of $I$ and among  ($m + \Delta$) assignments of $\mathcal{S}$ at least $m$ are critical, then $\mathcal{P}(\mathcal{S}) = \sum_{i=1}^{m + \Delta} \mathcal{P}(S_{i}) \geq f(m)$ holds.

%If $\mathcal{S} = \{S_{1}, S_{2}, \cdots, S_{m+\Delta}\}$ is a collection of assignment for MDSP with ($m + \Delta$) many drones, where $S_i$ is a compatible assignment for the drone $i$ $(1 \leq i \leq m + \Delta)$, $S_k \cap S_l = \emptyset$; $\forall k, l: 1 \leq k \neq l \leq m + \Delta$, $\cup_{i=1}^{m+\Delta} S_i$ is the set of $\sum_{i=1}^{m + \Delta} |S_{i}|$ largest density interval of $I$ and among  ($m + \Delta$) assignments of $\mathcal{S}$ at least $m$ assignments are critical, then $\mathcal{P}(\mathcal{S}) = \sum_{i=1}^{m + \Delta} \mathcal{P}(S_{i}) \geq f(m)$ holds. 
\end{theorem}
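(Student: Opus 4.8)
The plan is to generalize the contradiction argument of Lemma~\ref{lemma-1} from a single drone to $m$ drones, comparing the greedy top-density collection $\mathcal{S}$ against an optimal $m$-drone schedule. First I would fix an optimal solution $T = T_1 \cup \cdots \cup T_m$ of the MDSP with $m$ drones, so that $\mathcal{P}(T) = f(m)$ and, since each $T_i$ is feasible, $\mathcal{W}(T) = \sum_{i=1}^{m} \mathcal{W}(T_i) \leq mB$. Writing $S = \bigcup_{i=1}^{m+\Delta} S_i$ for the union of all intervals used by $\mathcal{S}$, the pairwise disjointness of the $S_i$ gives $\mathcal{P}(\mathcal{S}) = \mathcal{P}(S)$, so it suffices to show $\mathcal{P}(S) \geq f(m)$. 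I would assume, for contradiction, that $\mathcal{P}(S) < \mathcal{P}(T)$.

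Next I would cancel the common part $S \cap T$ to obtain $\mathcal{P}(S \setminus T) < \mathcal{P}(T \setminus S)$. The decisive structural input is the hypothesis that $S$ consists of the $\sum_{i=1}^{m+\Delta}|S_i|$ largest-density intervals of $I$: every interval of $S \setminus T$ then has density at least that of every interval of $T \setminus S \subseteq I \setminus S$, so $d(S \setminus T) \geq d(T \setminus S)$. Combining this density dominance with the profit inequality --- exactly the step borrowed from \cite{knapsack} in Lemma~\ref{lemma-1} --- I would deduce $\mathcal{W}(S \setminus T) < \mathcal{W}(T \setminus S)$, and hence $\mathcal{W}(S) < \mathcal{W}(T) \leq mB$.

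Finally, I would contradict this weight bound using criticality. By hypothesis at least $m$ of the assignments $S_1, \ldots, S_{m+\Delta}$ are critical, so each such $S_i$ satisfies $\mathcal{W}(S_i) > B$; as the $S_i$ are disjoint, $\mathcal{W}(S) \geq \sum_{\text{critical } i} \mathcal{W}(S_i) > mB$, which contradicts $\mathcal{W}(S) < mB$. Hence $\mathcal{P}(\mathcal{S}) \geq f(m)$. The step I expect to be the main obstacle is the conversion from the profit inequality to the weight inequality via density dominance: it rests on writing each $p_j = d_j c_j$ and using that $\min_{j \in S \setminus T} d_j \geq \max_{j \in T \setminus S} d_j$, and one must dispose of the degenerate case $T \setminus S = \emptyset$ (there $\mathcal{P}(T \setminus S) = 0$ makes the assumed strict inequality impossible). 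Notably, the $\Delta$ extra (possibly non-critical) drones play no role in the lower bound beyond contributing non-negative profit to $\mathcal{P}(S)$; all the work is carried by the $m$ critical assignments, whose total weight exceeds $mB$.
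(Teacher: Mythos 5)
Your proposal is correct and follows essentially the same contradiction argument as the paper: cancel the common part of $S$ and the optimal set $T$, use the largest-density hypothesis to convert the profit inequality $\mathcal{P}(S \setminus T) < \mathcal{P}(T \setminus S)$ into the weight inequality $\mathcal{W}(S) < \mathcal{W}(T) \leq mB$, and contradict the bound $\mathcal{W}(S) > mB$ coming from the $m$ disjoint critical assignments. In fact you make explicit two points the paper leaves implicit --- the justification of $\mathcal{W}(S) > mB$ via disjointness of the critical $S_i$, and the degenerate case $T \setminus S = \emptyset$ --- which only strengthens the write-up.
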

\begin{proof}
We prove the theorem by contradiction.\\
In contradictory, let $\mathcal{P}(\mathcal{S}) < f(m)$. Let $\mathcal{T}$ be the optimal family of assignment for the MDSP with $m$ drones, then $\mathcal{P}(\mathcal{T}) = f(m)$.

Let $S = \cup_{i=1}^{m+\Delta} S_i$ and $T$ be the set of all deliveries in $\mathcal{T}$. 
Then $\mathcal{P}(S) < \mathcal{P}(T)$, implies $\mathcal{P}(S \setminus T) < \mathcal{P}(T \setminus S)$. But we have $d(S \setminus T) \geq d(T \setminus S)$. So,
\begin{alignat}{2}
  &d(S \setminus T)
  \geq d(T \setminus S) &\& \hspace{16pt} \mathcal{P}(S \setminus T) < \mathcal{P}(T \setminus S)\\
  &\Rightarrow \mathcal{W}(S \setminus T) < \mathcal{W}(T \setminus S)\\
  &\Rightarrow \mathcal{W}(S) < \mathcal{W}(T) \label{eq:11}
\end{alignat}
The inequality (\ref{eq:11}) is a contradiction as $\mathcal{W}(T) \leq mB$ and $\mathcal{W}(S) > mB$.\\
Hence, the statement of the theorem follows.\qed
\end{proof}
We can construct an interval graph using the delivery intervals for a given instance of MDSP and find the maximum degree $\Delta$ of the graph. With this knowledge of $\Delta$, we propose an approximation Algorithm \ref{Algorithm-4} (\textsc{GreedyAlgoForMDSP}) to solve MDSP.
\vspace{-6mm}
\begin{algorithm}[H]
\caption{(\textsc{GreedyAlgoForMDSP})}\label{Algorithm-4}
\begin{algorithmic}[1]
\State \textbf{Initially:-} $M$ $=$ $\{1, 2, \cdots , m + \Delta\}$; $M'$ $=$ $\emptyset$; $\forall$ $i$ $\in$ $M:$ $L_{i}$ $=$ $\emptyset$;
$P_{i}$ $=$ $0$; $W_{i}$ $= 0$; $S_{i}$ $=$ $\emptyset$.
\State Sort delivery intervals $I$ according to their density. \Comment{wlog let $d_{1} \geq d_{2} \geq, \cdots, \geq d_{n}$.}
\For{$j\gets 1, n$}
\If{|$M'$| $\neq m$}
\State Find $i$ $\in$ M such that $S_{i} \cap I_{j} = \emptyset$
%\Comment{$S_{i} \cap I_{j} = \phi$ must hold for\\} \Comment{one such $i \in M$ as $|M| > \Delta$}
\If{$W_{i} + c_{j} \leq B$}
\State $W_{i} = W_{i} + c_{j}$; $S_{i} = S_{i} \cup \{{I_{j}}\}$; $P_{i} = P_{i} + p_{j}$.
\Else
%\State Find $i$ $\in$ M such that $S_{i} \cap I_{j} = \phi$
\State $W_{i} = W_{i} + c_{j}$; $S_{i} = S_{i} \cup \{{I_{j}}\}$; $P_{i} = P_{i} + p_{j}$
\State $M = M \setminus \{i\}; M' = M' \cup \{i\}; L_{i} = \{j\}$.
\EndIf
\Else
\State break.
\EndIf
\EndFor
\For{each $i$ in $M'$}
\If{($P_{i} - \mathcal{P}(L_{i})) \geq \mathcal{P}(L_{i})$}
\State $W_{i} = W_{i} - \mathcal{W}(L_{i})$; $P_{i} = P_{i} - \mathcal{P}(L_{i})$; $S_{i} = S_{i} \setminus L_{i}$
\Else
\State $W_{i} = \mathcal{W}(L_{i})$; $P_{i} = \mathcal{P}(L_{i})$; $S_{i} = L_{i}$.
\EndIf
\EndFor
\State Return $m$ most profitable $i$ $\in$ $(M \cup M')$ along with it's $S_i$.
\end{algorithmic}
\end{algorithm}
\vspace{-8mm}
\begin{lemma}
\label{lemma-3}
Algorithm \ref{Algorithm-4} gives a feasible solution for MDSP with $m$ drones.
\end{lemma}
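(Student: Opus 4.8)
The plan is to verify the three properties that together constitute feasibility for MDSP as defined in Problem~\ref{Problem 1}: each returned assignment $S_i$ must be \emph{compatible}, must respect the battery budget $\mathcal{W}(S_i) \le B$, and the $m$ returned assignments must be pairwise disjoint. I would organize the argument around these three conditions, preceded by a preliminary step establishing that the algorithm is well defined.

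First I would show that the selection step (finding $i \in M$ with $S_i \cap I_j = \emptyset$) can always be executed whenever $|M'| \neq m$. The key invariant is that during the first loop no drone is moved into $M'$ once $|M'| = m$, so throughout the processing of $I_j$ we have $|M'| \le m-1$ and hence $|M| = (m+\Delta) - |M'| \ge \Delta + 1$. At the moment $I_j$ is processed, only already-scheduled intervals from $\{I_1,\dots,I_{j-1}\}$ can block a drone, and the number of such intervals conflicting with $I_j$ is at most the degree of $I_j$ in the interval graph, which is at most $\Delta$. Since each conflicting interval lies in exactly one drone's current assignment, at most $\Delta$ of the drones can be blocked; as $|M| \ge \Delta+1$, at least one drone in $M$ is free. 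This counting argument is the main obstacle, because it is exactly where the hypothesis of having $m+\Delta$ drones (rather than just $m$) is used, and it must be phrased carefully so that conflicting intervals parked in $M'$ do not inadvertently reduce the count of free drones available in $M$.

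Next I would establish compatibility and the budget bound. Compatibility is immediate: an interval is appended to $S_i$ only after the test $S_i \cap I_j = \emptyset$, so every $S_i$ stays a compatible set throughout the first loop, and the post-processing loop only deletes intervals from each $S_i$ (replacing it by either $S_i \setminus L_i$ or $L_i$), which cannot destroy compatibility. For the budget, a drone that remains in $M$ only ever received intervals under the guard $W_i + c_j \le B$, so $\mathcal{W}(S_i) \le B$. A drone moved to $M'$ became \emph{critical}: the single interval in $L_i$ is precisely the one that pushed the cost above $B$, so $\mathcal{W}(S_i \setminus L_i) \le B$, while $\mathcal{W}(L_i) = c_j \le B$ by the standing assumption $c_j \le B$ made at the start of Section~\ref{section:multiDrone}. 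Since the post-processing retains one of these two sub-assignments, every surviving assignment satisfies $\mathcal{W}(S_i) \le B$.

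Finally I would argue disjointness and conclude. Each delivery index $j$ is handled exactly once in the first loop and inserted into the assignment of a single drone, so the assignments $\{S_i\}_{i \in M \cup M'}$ are pairwise disjoint; the post-processing only removes intervals, and the final step merely selects the $m$ most profitable of the $m+\Delta$ assignments, neither of which can create an overlap. Combining the three verified properties---compatibility, $\mathcal{W}(S_i)\le B$, and pairwise disjointness of the returned $m$ assignments---shows that \textsc{GreedyAlgoForMDSP} outputs a feasible solution for MDSP with $m$ drones, which is exactly the claim of the lemma.
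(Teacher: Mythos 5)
Your proof is correct and takes essentially the same approach as the paper's: the heart of both arguments is the counting step (whenever $|M'| \le m-1$ we have $|M| \ge \Delta + 1$ drones in $M$, while at most $\Delta$ of them can hold an interval conflicting with $I_j$), followed by the observation that replacing a critical assignment $S_i$ by either $S_i \setminus L_i$ or $L_i$ restores the budget, and finally selecting the $m$ most profitable assignments. Your version is slightly more explicit than the paper's on two points it leaves implicit---that $\mathcal{W}(L_i) = c_j \le B$ relies on the standing assumption $c_j \le B$, and that the returned assignments are pairwise disjoint---but this is added rigor within the same argument, not a different route.
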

\begin{proof}
Instead of $m$ drones, the algorithm starts with ($m + \Delta$) drones. For assigning a delivery $j$ (in order to their non-decreasing order of density) to a drone in $M$ at step $3-15$, the algorithm checks whether there are $m$ critical assignments (recorded by $M'$) in the current schedule or not. If not (i.e., $M' \neq m$), then there is at least ($\Delta + 1$) drones in $M$, because $|M|\geq (m + \Delta) - (m - 1)$, to assign the delivery $j$. In this schedule, $j$ can conflict with the assignment of at most $\Delta$ many drones in $M$. Therefore, there always exists a drone $i$ in $M$ such that it's current assignment is compatible with $I_j$, i.e., $S_{i} \cap I_{j} = \emptyset$. Now, if the addition of delivery $j$ to $S_i$ does not exceed the battery budget, i.e., $W_{i} + c_{j} \leq B$, where $W_i = \mathcal{W}(S_i)$. Then drone $i$ is feasible for the delivery $j$. So, in this case, we schedule the delivery $j$ to the drone $i$ and update the current cost ($W_i$) and profit ($P_i$) value for the drone $i$ at step $7$. Otherwise ($W_{i} + c_{j} > B$), the addition of delivery $j$ to the current schedule ($S_i$) of $i$ makes the assignment critical. So, in this case we also update the current assignment ($S_i$), cost ($W_i$) and profit ($P_i$) value for the drone $i$ at step $9$. In addition, we remove the drone $i$ from the set $M$ and add it to the set $M'$ at step $10$.

So, for any delivery, there is always a drone whose present assigned deliveries are not conflicting with the considered one. That's why either we can find $m$ critical assignments, whose indices are recorded by $M'$, or all the deliveries are assigned to the $(m + \Delta)$ drones. Whenever we find $m$ critical assignment, i.e., $|M'| = m$, we break the loop at step $13$.

At the end of step $15$, we have $(m + \Delta)$ compatible assignments, among them at most $m$ are critical. To make those assignments feasible, for each assignment, we ignore either the last delivery or all the preceding deliveries for which the profit is maximum.

But our objective is to find $m$ feasible assignments. For that at step 23, we take $m$ most profitable assignments among all $(m + \Delta)$ feasible assignments.
Thus Algorithm \ref{Algorithm-4} finds a feasible solution for MDSP with $m$  drones. \qed
\end{proof}

\begin{lemma}\label{lemma-4}
The time complexity of  
Algorithm \ref{Algorithm-4} is $\mathcal{O}$($n^{2}$), where $n= |\mathcal{N}|$. 
\end{lemma}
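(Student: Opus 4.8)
The plan is to bound the running time of Algorithm \ref{Algorithm-4} by accounting for each block separately and showing that the main assignment loop (lines 3--15) dominates. I would first assume without loss of generality that $m \le n$, since each delivery is assigned to at most one drone and hence at most $n$ drones are ever used; combined with $\Delta \le n-1$ for the interval graph on $n$ deliveries, this gives $m + \Delta = \mathcal{O}(n)$. The initialization in line 1 touches each of the $m+\Delta$ drones a constant number of times and so costs $\mathcal{O}(m+\Delta) = \mathcal{O}(n)$, while sorting the deliveries by density in line 2 costs $\mathcal{O}(n \log n)$.

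The heart of the argument is the cost of the main loop, and within it the compatibility search in line 5, where for delivery $j$ we must locate a drone $i \in M$ with $S_i \cap I_j = \emptyset$. The key observation is that the current assignments $\{S_i\}$ are pairwise disjoint subsets of $I$, so at every moment $\sum_i |S_i| \le n$. When scanning drones in search of a compatible one, a drone is rejected as soon as a single interval of its assignment is found to conflict with $I_j$, and is accepted only after all of its intervals have been checked against $I_j$; in either case each already-assigned interval is inspected at most once during the processing of $I_j$. Hence the total work of line 5 for a single delivery is $\mathcal{O}\!\left(\sum_i |S_i| + (m+\Delta)\right) = \mathcal{O}(n)$, where the additive $(m+\Delta)$ covers the per-drone overhead. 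The battery test in line 6 and the constant-size updates in lines 7--10 are each $\mathcal{O}(1)$, so one iteration costs $\mathcal{O}(n)$, and summing over the at most $n$ iterations yields $\mathcal{O}(n^2)$ for the main loop.

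It then remains to bound the post-processing cheaply. The loop in lines 16--22 ranges over $M'$, which contains at most $m$ drones, and for each such drone $L_i$ is the singleton $\{j\}$, so $\mathcal{P}(L_i)$ and $\mathcal{W}(L_i)$ are evaluated in $\mathcal{O}(1)$; this block therefore costs $\mathcal{O}(m) = \mathcal{O}(n)$. The return step in line 23 selects the $m$ most profitable among the $m+\Delta$ drones in $\mathcal{O}((m+\Delta)\log(m+\Delta)) = \mathcal{O}(n \log n)$ (or $\mathcal{O}(n)$ by linear-time selection). Adding all contributions, the $\mathcal{O}(n^2)$ term from the main loop dominates, giving a total running time of $\mathcal{O}(n^2)$.

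I expect the delicate point to be the analysis of line 5. A naive reading suggests checking every drone against every interval of its assignment, which would cost up to $\mathcal{O}(n)$ per drone and $\mathcal{O}(n \cdot (m+\Delta))$ per delivery, inflating the bound to $\mathcal{O}(n^3)$. The argument that collapses this to $\mathcal{O}(n)$ per delivery is the amortization via $\sum_i |S_i| \le n$ together with early termination on the first conflict; making precise that no assigned interval is scanned more than once while a fixed $I_j$ is processed is the step that requires the most care.
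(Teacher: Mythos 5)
Your proof is correct and follows essentially the same route as the paper: both decompose the cost into sorting, the main loop, and post-processing, and both bound the compatibility search of line 5 per delivery by amortizing over the at most $n$ (the paper uses at most $j-1$) already-assigned intervals, so that the main loop dominates at $\mathcal{O}(n^{2})$. Your treatment is slightly more explicit about the amortization (disjointness of the $S_i$ plus early termination) and about why $m + \Delta = \mathcal{O}(n)$, which the paper simply asserts as $m, \Delta \leq n$, but these are refinements of the same argument rather than a different approach.
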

\begin{proof}
Step-wise analysis of the Algorithm \ref{Algorithm-4} is given below.
The $n$ delivery intervals can be sorted at step 2 in $\mathcal{O}(n \log n)$ time.
%For step (3-15), define $T_{ij}$ as the number of deliveries assign to the drone $i$ after the assignment of the delivery $j$ $(1 \leq i \leq m + \Delta$ and $1 \leq j \leq n$). Then, $\sum_{i=1}^{m+\Delta}$ $T_{ij}$ $=$ $j$ $\leq$ $\mathcal{O}(j)$.
For assigning the delivery $j$ according to their non-decreasing order of density to a drone in $M$, at most $(\Delta + 1)$ compatibility check to any of the ($\Delta+1$) assignments of the drones in $M$ is sufficient. Since there are at most ($j-1$) deliveries assigned to those drones, we can find one such drone $i$ in $M$ for which $S_{i} \cap I_{j} = \emptyset$ by $\mathcal{O}(j)$ many comparisons. Then all the update operations corresponding to $S_i$, $W_i$, $P_i$, $M$ and $M'$ take constant time. 
Hence, total running time for step (3-15) is $\leq$ $\sum_{j=1}^{n}$  $\mathcal{O}(j)$ $=$ $\mathcal{O}(n^{2})$.
%For a fix $j \in \mathcal{N}$, at most $(m + \Delta)$ drones are feasible for assigning the delivery $j$. But, those drones have $\mathcal{O}(j)$ many deliveries. So, for a fix $j$ in step 3, number of comparisons done in step (4-14) is $\leq$ $\mathcal{O}(j)$ $+$ $\mathcal{O}(m + \Delta)$. 
%Hence, total running time for step (3-15) is $\leq$ $\sum_{j=1}^{n}$  $\mathcal{O}(j)$ $+$ $\mathcal{O}(m + \Delta)$ $=$ $\mathcal{O}(n^{2}) + \mathcal{O}(nm + n\delta)$ $\leq$ $\mathcal{O}(n^{2})$ (as $m, \Delta \leq n$).\\

Step (16-22) runs in $\leq$ $\mathcal{O}(|M^{'}|)$  $\leq$ $\mathcal{O}(m)$ time. Last step can be done in $\mathcal{O}(m + \Delta) + \mathcal{O}(m \log (m + \Delta)) \leq  \mathcal{O}(n^{2})$  time. (as $m, \Delta \leq n$)\\
Thus, the overall running time is $\mathcal{O}(n^{2}$). \qed
\end{proof}
\begin{lemma}
\label{lemma-5}
Algorithm \ref{Algorithm-4} gives an $\frac{m}{2(m+\Delta)}$ approximation factor of the MDSP with $m$ many drones and $\delta$ being the maximum degree of the interval graph.
\end{lemma}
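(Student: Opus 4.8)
The plan is to track the total profit through the two points where Algorithm~\ref{Algorithm-4} can lose value relative to the optimum $f(m)$, and to multiply the corresponding loss factors. Concretely, I would let $\mathcal{S} = \{S_1,\dots,S_{m+\Delta}\}$ denote the family of assignments present at the end of the greedy loop (line~15), bound $\sum_{i=1}^{m+\Delta}\mathcal{P}(S_i)$ from below by $f(m)$ using Theorem~\ref{Theorem-4}, then show that (i) the feasibility repair in lines~16--22 retains at least half of the profit of each assignment, and (ii) returning the $m$ most profitable of the $m+\Delta$ feasible assignments retains at least an $\tfrac{m}{m+\Delta}$ fraction. Chaining these gives $\mathrm{ALG}\ge \tfrac{m}{m+\Delta}\cdot\tfrac12\cdot f(m)=\tfrac{m}{2(m+\Delta)}f(m)$.

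First I would verify that Theorem~\ref{Theorem-4} applies to $\mathcal{S}$. Because deliveries are processed in non-increasing density order and Lemma~\ref{lemma-3} guarantees that every considered delivery is assigned to some compatible drone, the set $\bigcup_i S_i$ is exactly the prefix of the $\sum_i |S_i|$ largest-density intervals, and each $S_i$ is compatible; when the loop breaks at $|M'|=m$ there are exactly $m$ critical assignments, so the hypotheses of Theorem~\ref{Theorem-4} hold and $\sum_{i=1}^{m+\Delta}\mathcal{P}(S_i)\ge f(m)$. The remaining possibility, that the loop terminates with $|M'|<m$, means all $n$ deliveries are scheduled, so $\sum_i \mathcal{P}(S_i)=\mathcal{P}(I)\ge f(m)$ trivially; this case can be folded into the same bound.

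Next I would analyse lines~16--22. Each critical assignment $S_i$ (for $i\in M'$) is made feasible by keeping whichever of $S_i\setminus L_i$ or $L_i$ is more profitable, so its retained profit is $\max\bigl(\mathcal{P}(S_i)-\mathcal{P}(L_i),\,\mathcal{P}(L_i)\bigr)\ge \tfrac12\bigl((\mathcal{P}(S_i)-\mathcal{P}(L_i))+\mathcal{P}(L_i)\bigr)=\tfrac12\mathcal{P}(S_i)$, which is exactly the averaging behind Corollary~\ref{corollary-1}; the non-critical assignments (those left in $M$) are already feasible and keep their full profit. Hence the total profit $\Pi$ over all $m+\Delta$ feasible assignments satisfies $\Pi\ge \tfrac12\sum_{i=1}^{m+\Delta}\mathcal{P}(S_i)\ge\tfrac12 f(m)$.

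Finally, for the selection in line~23 I would invoke the elementary fact that among $m+\Delta$ nonnegative numbers the sum of the $m$ largest is at least $\tfrac{m}{m+\Delta}$ of the total, since the average of the top $m$ dominates the overall average. Applying this to the $m+\Delta$ feasible profits yields $\mathrm{ALG}\ge \tfrac{m}{m+\Delta}\,\Pi\ge \tfrac{m}{2(m+\Delta)}f(m)$, the claimed factor. I expect the main obstacle to be the first step: cleanly justifying that the greedy output meets the exact hypotheses of Theorem~\ref{Theorem-4} --- in particular that the scheduled intervals form a top-density prefix and that the break condition delivers precisely $m$ critical assignments --- and handling the degenerate termination case uniformly; the two loss factors themselves are routine once that bound is in hand.
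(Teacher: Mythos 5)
Your proposal is correct and follows essentially the same route as the paper's proof: bound the profit at the end of the greedy loop by $f(m)$ via Theorem~\ref{Theorem-4} (with the all-deliveries-assigned case handled separately), lose at most a factor $\tfrac12$ in the feasibility repair, and lose at most a factor $\tfrac{m}{m+\Delta}$ when selecting the $m$ most profitable assignments. Your explicit averaging argument $\max(\mathcal{P}(S_i)-\mathcal{P}(L_i),\mathcal{P}(L_i))\ge\tfrac12\mathcal{P}(S_i)$ is in fact a cleaner justification of the half-loss step than the paper's citation of Corollary~\ref{corollary-1}, which as stated bounds the maximum against $f(1)/2$ rather than $\mathcal{P}(S_i)/2$.
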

\begin{proof}
Let $OPT = f(m)$, be the optimum profit for the MDSP with $m$ many drones.
Let $\mathcal{S} = \{S_{1}, S_{2}, \cdots, S_{m+\Delta}\}$ be the family of assignments return by the Algorithm \ref{Algorithm-4} at the step 15. At this step, either all the deliveries are assigned to the $(m+\Delta)$ drones or $m$ assignments are critical. For any case $\mathcal{P}(\mathcal{S}) = \sum_{i=1}^{m + \Delta} \mathcal{P}(S_{i}) \geq OPT$.  ($m$ assignments are critical implies $\mathcal{P}(\mathcal{S}) \geq OPT$ from Theorem \ref{Theorem-4}).

Without loss of generality, let  \{$S_{1}, S_{2}, \cdots, S_{t}$\} $(t \leq m)$ be a collection of critical assignment. To make these assignments feasible, the algorithm moderates each by omitting either the last delivery or all the remaining deliveries for which the profit is maximum.
%taking the maximum between the profit of the last delivery and the profit of the remaining deliveries.
Let \{$S'_{1}, S'_{2}, \cdots, S'_{t}$\} be the moderate set. Then $\mathcal{P}(S^{'}_{i}) \geq \frac{\mathcal{P}(S_{i})}{2}$, $\forall i: 1 \leq i \leq t$ (from corollary \ref{corollary-1}).

The final solution returns by the algorithm is those assignments corresponding to the $m$ largest profits among $\{\mathcal{P}(S_{1}^{'}), \mathcal{P}(S_{2}^{'}), \cdots, \mathcal{P}(S_{t}^{'}), \mathcal{P}(S_{t+1}), \mathcal{P}(S_{t+2}), $ $\cdots, \mathcal{P}(S_{m+\Delta})\}$,
let $\mathcal{S}_{A}$ be that collection. Then,
\begin{equation}
	\begin{split}
		\mathcal{P} (\mathcal{S}_{A}) & \geq \frac{m}{m+\Delta} ( \sum_{i=1}^{t} \mathcal{P}(S'_{i}) + \sum_{i = t+1}^{m+\Delta} \mathcal{P}(S_{i}) )\\
		& \geq \frac{m}{m+\Delta} ( \sum_{i=1}^{t} \frac{1}{2} \mathcal{P}(S_{i}) + \sum_{i = t+1}^{m+\Delta} \mathcal{P}(S_{i}) )
		 \geq \frac{m}{m+\Delta} .\frac{1}{2} ( \sum_{i=1}^{m+\Delta} \mathcal{P}(S_{i}) )\\ %\sum_{i = t+1}^{m+\delta} \mathcal{P}(S_{i}) )\\
	&	 = \frac{m}{2(m+\Delta)} \mathcal{P}(\mathcal{S})
		 \geq \frac{m}{2(m+\Delta)} OPT.
	\end{split} 
\end{equation} 
\vspace{-3.5mm}
\qed
\end{proof}
\begin{theorem}
For $m \geq \Delta$ Algorithm \ref{Algorithm-4} gives an $\frac{1}{4}$-approximation algorithm for MDSP, where $m$ is the number of drones and $\Delta$ is the maximum degree of the interval graph constructed from given delivery time intervals.
\end{theorem}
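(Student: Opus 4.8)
The plan is to derive this theorem directly as a specialization of Lemma~\ref{lemma-5}, which already establishes that Algorithm~\ref{Algorithm-4} attains an approximation ratio of $\frac{m}{2(m+\Delta)}$ for arbitrary values of $m$ and $\Delta$. Since all the substantive work — the density-exchange argument of Theorem~\ref{Theorem-4} guaranteeing $\mathcal{P}(\mathcal{S}) \geq f(m)$, the factor-$\tfrac12$ loss incurred when moderating each critical assignment (Corollary~\ref{corollary-1}), and the averaging over the $m+\Delta$ assignments to select the $m$ most profitable ones — is carried out there, the only remaining task is to bound the quantity $\frac{m}{2(m+\Delta)}$ from below under the hypothesis $m \geq \Delta$.

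First I would observe that $m \geq \Delta$ immediately yields $m + \Delta \leq 2m$. Substituting this into the ratio gives
\[
\frac{m}{2(m+\Delta)} \;\geq\; \frac{m}{2\cdot 2m} \;=\; \frac{1}{4},
\]
so the profit returned by the algorithm is at least $\tfrac14\, f(m) = \tfrac14\, OPT$. Combined with Lemma~\ref{lemma-3} (feasibility of the returned solution) and Lemma~\ref{lemma-4} (the $\mathcal{O}(n^2)$ running time), this certifies that Algorithm~\ref{Algorithm-4} is a polynomial-time $\tfrac14$-approximation whenever $m \geq \Delta$.

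I do not expect any genuine obstacle in this final step, since it reduces to a one-line monotonicity estimate: the ratio $\frac{m}{2(m+\Delta)}$ is increasing in $m$ for fixed $\Delta$, so the regime $m \geq \Delta$ is precisely the range where it clears the $\tfrac14$ threshold. The real difficulty lies upstream, in Lemma~\ref{lemma-5} and Theorem~\ref{Theorem-4}, whose exchange argument requires that the greedy rule selects the globally densest intervals and that at least $m$ of the $m+\Delta$ assignments be critical; those hypotheses are exactly what Lemma~\ref{lemma-3} verifies the algorithm maintains. If I were to stress-test the argument, I would double-check that $m + \Delta \leq 2m$ is applied in the correct inequality direction and confirm that the boundary case $\Delta = m$, where the bound is tight, indeed returns exactly $\tfrac14$.
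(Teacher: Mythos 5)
Your proposal is correct and matches the paper's own proof essentially step for step: both invoke Lemma~\ref{lemma-3} for feasibility, Lemma~\ref{lemma-4} for the $\mathcal{O}(n^2)$ running time, and then specialize the $\frac{m}{2(m+\Delta)}$ ratio of Lemma~\ref{lemma-5} using the hypothesis $m \geq \Delta$ (your form $m+\Delta \leq 2m$ is the same inequality the paper writes as $\frac{m}{m+\Delta} \geq \frac{1}{2}$). No differences worth noting.
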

\begin{proof}
Correctness of the algorithm follows from lemma \ref{lemma-3} and polynomial running time follows from lemma \ref{lemma-4}.

Now, for $m$ $\geq$ $\Delta$, $\frac{m}{m+\Delta}$ $\geq$ $\frac{1}{2}$, implies $\frac{m}{2(m+\Delta)}$ $\geq$ $\frac{1}{4}$. 
Hence from lemma \ref{lemma-5}, if $\mathcal{S}_{A}$ is the solution of MDSP, derived by using the algorithm \ref{Algorithm-4} and OPT is the optimum profit of that problem, then ${\mathcal{P}} (\mathcal{S}_{A}) \geq \frac{1}{4}OPT$.
Hence the proof. \qed
\end{proof}
%\vspace{-15pt}
\section{Conclusion}
\label{section:conclusion}
%\vspace{-6pt}
This paper studied the drone-delivery scheduling problem for single and multiple drone cases. We propose an FPTAS for the single drone-delivery scheduling problem (SDSP) with running time $\mathcal{O}$($n^{2}  \lfloor \frac{n}{\epsilon} \rfloor )$ for $\epsilon$ > 0 and $n$ being the number of deliveries. Also propose an $\frac{1}{4}$-approximation algorithm with running time $\mathcal{O}(n^{2})$ for MDSP with $m$ drones if $m$ $\geq$ $\Delta$, where $\Delta$ is maximum degree of the interval graph constructed from given delivery time intervals.
It would be interesting to find a constant factor approximation algorithm for future work rather than an asymptotic PTAS for MDSP without any constraints on the number of drones. Also, if drones are not identical concerning the battery budget, what would be the approach for an approximation algorithm?
\bibliographystyle{splncs03}
\bibliography{bib-nourl}

\begin{thebibliography}{10}
\providecommand{\url}[1]{\texttt{#1}}
\providecommand{\urlprefix}{URL }

\bibitem{amazon}
Amazon: Amazon customers in lockeford, california, will be among the first to
  receive prime air drone deliveries in the u.s.,
  \url{https://www.aboutamazon.com/news/transportation/amazon-prime-air-prepares-for-drone-deliveries}

\bibitem{Betti_ICDCN22}
Betti~Sorbelli, F., Cor\`{o}, F., Das, S.K., Palazzetti, L., Pinotti, C.M.:
  Greedy algorithms for scheduling package delivery with multiple drones. p.
  31–39. ICDCN 2022, ACM, New York, NY, USA (2022)

\bibitem{Boysen2018DroneDF}
Boysen, N., Briskorn, D., Fedtke, S., Schwerdfeger, S.: Drone delivery from
  trucks: Drone scheduling for given truck routes. Networks  72,  506 -- 527
  (2018)

\bibitem{LastMile}
Boysen, N., Fedtke, S., Schwerdfeger, S.: Last-mile delivery concepts: a survey
  from an operational research perspective. OR Spectrum  43,  1--58 (03 2021)

\bibitem{fptas}
Chekuri, C., Khanna, S.: A ptas for the multiple knapsack problem. In:
  Proceedings of the Eleventh Annual ACM-SIAM Symposium on Discrete Algorithms.
  p. 213–222. SODA '00, Society for Industrial and Applied Mathematics, USA
  (2000)

\bibitem{CRISAN201938}
Crişan, G.C., Nechita, E.: On a cooperative truck-and-drone delivery system.
  Procedia Computer Science  159,  38--47 (2019), knowledge-Based and
  Intelligent Information \& Engineering Systems: Proceedings of the 23rd
  International Conference KES2019

\bibitem{Daknama2017VehicleRW}
Daknama, R., Kraus, E.: Vehicle routing with drones. ArXiv  abs/1705.06431
  (2017)

\bibitem{knapsack}
Dantzig, G.B.: Discrete-variable extremum problems. Operations Research  5(2),
  266--288 (1957)

\bibitem{defence}
Kardasz, P., Doskocz, J.: Drones and possibilities of their using. Journal of
  Civil \& Environmental Engineering  6 (01 2016)

\bibitem{Maghazei}
Maghazei, O., Netland, T.: Drones in manufacturing: Exploring opportunities for
  research and practice. Journal of Manufacturing Technology Management
  Forthcoming (09 2019)

\bibitem{MURRAY201586}
Murray, C.C., Chu, A.G.: The flying sidekick traveling salesman problem:
  Optimization of drone-assisted parcel delivery. Transportation Research Part
  C: Emerging Technologies  54,  86--109 (2015)

\bibitem{Murray2019TheMF}
Murray, C.C., Raj, R.: The multiple flying sidekicks traveling salesman
  problem: Parcel delivery with multiple drones. Transportation Research Part
  C: Emerging Technologies  110,  368--398 (2020)

\bibitem{healthcare}
Park, H.J., Mirjalili, R., C\^{o}t\'{e}, M.J., Lim, G.J.: Scheduling diagnostic
  testing kit deliveries with the mothership and drone routing problem. J.
  Intell. Robotics Syst.  105(2) (jun 2022)

\bibitem{Sibanda}
Sibanda, M., Mutanga, O., Chimonyo, V.G.P., Clulow, A.D., Shoko, C., Mazvimavi,
  D., Dube, T., Mabhaudhi, T.: Application of drone technologies in surface
  water resources monitoring and assessment: A systematic review of progress,
  challenges, and opportunities in the global south. Drones  5(3) (2021)

\bibitem{Vazirani2003}
Vazirani, V.V.: Approximation algorithms. Springer (2001)

\bibitem{Mara}
{Windras Mara}, S.T., Rifai, A.P., Sopha, B.M.: An adaptive large neighborhood
  search heuristic for the flying sidekick traveling salesman problem with
  multiple drops. Expert Systems with Applications  205,  117647 (2022)

\end{thebibliography}
%\bibliography{mybib}

%\newpage
%\appendix
%{\Large Appendix}
\centering
\begin{table}[H]
\setlength{\tabcolsep}{10pt}
\caption{%List of Symbols used in this paper
List of variables used in this paper}
% \label{tab:notation}
	\begin{tabular}{|p{0.3cm}|p{5cm}|p{0.3cm}|p{5.87cm}|}
%		\hline
%		Symbol & Description & Symbol & Description\\
%		\hline
		\hline
		$\mathcal{N}$& Set of deliveries& $L(j)$& Index of nearest conflict-free interval of $I_j$. \\
		\hline
 		$\mathcal{M}$& Set of drones& $\mathcal{P}$& Profit function.  \\
		\hline
		$n$ & No. of deliveries& $\mathcal{W}$& Cost function.  \\
		\hline
		$m$ & No. of drones& $B$& Battery budget for a drone. \\
		\hline
		$i$ & Index for a drone& $L_{i}$& Last delivery assigned to the drone $i$.  \\
		\hline
 		$j$ & Index for a delivery & $M^{'}$& Set of drones with critical assignments. \\
 		\hline
		$\delta_{j}$& Location for delivery $j$& $P$& Maximum profit. \\
 		\hline
		$\delta_{j}^{L}$& Launch point for  delivery $j$& $S$& Set of delivery time intervals.  \\
 		\hline
		$\delta_{j}^{R}$& Rendezvous point for delivery $j$& $P_{i}$& Profit for the drone $i$.  \\
 		\hline
 		$t_{j}^{L}$& Launch time for delivery $j$& $W_{i}$& Cost for the the  drone $i$. \\
 		\hline
		$t_{j}^{R}$& Rendezvous time for delivery $j$& $S_{i}$& Assignment for the the drone $i$.  \\
 		\hline
% 		$I_{j}$& Delivery time interval for the delivery $j$. \\
% 		\hline
% 		$p_{j}$& Profit for the delivery $j$.  \\
% 		\hline
% 		$c_{j}$& Cost for the delivery $j$.  \\
% 		\hline
%		$L(j)$& Index of nearest conflict-free interval of $I_j$ \\
% 		\hline
% 		$\mathcal{P}$& Profit function.  \\
% 		\hline
% 		$\mathcal{W}$& Cost function. \\
% 		\hline
% 		$B$& Battery budget for a drone. \\
% 		\hline
% 		$P$& Maximum profit. \\
% 		\hline
% 		S& Set of delivery time intervals. \\
% 		\hline
% 		$\Delta$& Maximum degree of the interval graph. \\
% 		\hline
% 		$P_{i}$& Profit for the drone $i$. \\
% 		\hline
% 		$W_{i}$& Cost for the the  drone $i$. \\
% 		\hline
% 		$S_{i}$& Assignment for the the drone $i$. \\
% 		\hline
% 		$L_{i}$& Last delivery assigned to the drone $i$. \\
% 		\hline
% 		$M^{'}$& Set of drones with critical assignments. \\
% 		\hline
 		$d_{j}$& Density/profit per cost of delivery $j$& $\Delta$& Maximum degree of the interval graph. \\
		\hline
		\hline
	\end{tabular}

\label{tab:notation}		%\end{center}
\end{table}
\end{document}